\documentclass[10pt,conference]{IEEEtran}
\normalsize
\usepackage{amsmath,amsfonts,amssymb,amsbsy,url,verbatim,amsthm}
\usepackage{comment}
\usepackage{graphicx}
\usepackage{balance}
\usepackage{times}
\usepackage{dsfont}
\usepackage{subfig}
\usepackage{pgfplots}
\usepackage{cite}
\usepackage{bm}
\usepackage{algorithm,algorithmic, color}

\graphicspath{{fig/}}

\makeatletter
\def\input@path{{fig/}{./}}
\makeatother

\newtheorem{proposition}{Proposition}

\newcommand{\hermitian}{^{H\scriptscriptstyle}}
\newcommand{\transpose}{^{T\scriptscriptstyle}}

\newcommand{\bi}{\begin{itemize}}
\newcommand{\ei}{\end{itemize}}
\newcommand{\ben}{\begin{enumerate}}
\newcommand{\een}{\end{enumerate}}
\newcommand{\bc}{\begin{cases}}
\newcommand{\ec}{\end{cases}}
\newcommand{\bd}{\begin{description}}
\newcommand{\ed}{\end{description}}
\newcommand{\e}{\item}
\newcommand{\be}{\begin{equation}}
\newcommand{\ee}{\end{equation}}
\newcommand{\bea}{\begin{eqnarray}}
\newcommand{\eea}{\end{eqnarray}}

\def \x{\mathbf x}

\def \z{\mathbf z}
\def \y{\mathbf y}
\def \t{\mathbf t}
\def \e{\mathbf e}
\def \f{\mathbf f}
\def \b{\mathbf b}
\def \a{\mathbf a}
\def \u{\mathbf u}

\def \M{\mathbf M}

\IEEEoverridecommandlockouts

\begin{document}

\title{Cube-Split: Structured Quantizers on the Grassmannian of Lines}

\author{Alexis~Decurninge, Maxime~Guillaud\\
Mathematical and Algorithmic Sciences Laboratory, Huawei Technologies\\
France Research Center, 20 quai du Point du Jour, 92100 Boulogne Billancourt, France\\
email: \texttt{\{alexis.decurninge,maxime.guillaud\}@huawei.com}}

\maketitle

\begin{abstract}
This paper introduces a new quantization scheme for real and complex Grassmannian sources. The proposed approach relies on a structured codebook based on a geometric construction of a collection of bent grids defined from an initial mesh on the unit-norm sphere. The associated encoding and decoding algorithms have very low complexity (equivalent to a scalar quantizer), while their efficiency (in terms of the achieved distortion) is on par with the best known structured approaches, and compares well with the theoretical bounds.
These properties make this codebook suitable for high-resolutions, real-time applications such as channel state feedback in massive multiple-input multiple-output (MIMO) wireless communication systems. 
% implementation complexity is much lower than the state-of-the-art high-resolution quantizers.
\end{abstract}

%\begin{IEEEkeywords}
%\end{IEEEkeywords}

\section{Introduction}
We address in this paper the quantization of a source uniformly distributed on a real or complex Grassmannian. An important performance metric for a quantizer is the resulting average quantization error, or distortion.
%{\bf explain efficiency and complexity, and describe the trade-offs achieved by state of the art methods.}
%A straightforward approach is to use scalar quantization where each component of the vector is quantized separately. 
%More advanced approaches use the Grassmannian constraints to design codebooks using the minimum number of feedback bits possible for a given accuracy. 
For a given number of quantization bits, theoretical limits on the minimum attainable distortion have been drawn (see e.g \cite{dai08}). In order to attain this bound, some quantizer design strategies which have been proposed use numerically optimized codebooks (e.g. without particular structure \cite{Linde_Buzo_Gray_algo_TCOM1980} or incorporating additional structure in the searched codebook in order to lower the optimization complexity \cite{hochwald00}). Unfortunately, the codebook size must increase exponentially with the Grassmannian dimension in order to maintain a given effective quantization accuracy \cite{loveheath03}. Thus, since these codebooks require an exhaustive search and large storage capacity, they are intractable for large dimensions.

More recently, emphasis has been put on high-dimensional and high-resolution quantizers with less complex encoders and decoders, while allowing some departure from the theoretical optimal distortion bounds. In \cite{boccardi07}, a fast quantization based on a codebook comprised of Fourier vectors is introduced. However, the Fourier structure results in poor packing efficiency (we define efficiency as the ability to achieve the optimal slope in the distortion vs. codebook size performance plots at high resolution, i.e. for asymptotically large codebooks). The method of \cite{choi15} using a trellis structure to design codebooks suffers from the same drawback. Codebooks based on projecting a lattice onto a sphere (see \cite{ryan07}) or on a lattice combined with simplices \cite{ryan09} have good efficiency. However, in these approaches, the quantizer is still exponentially complex with respect to the number of bits while exhibiting a polynomial complexity with respect to the dimension of the source.

%On an other hand, transformed codebooks are related a wide class of quantizers often used for their low-complexity. For example, Xia and Giannakis \cite{xia06} proposed a transformed codebook for correlated MIMO channel.
Product quantizers \cite[p.~430]{gersho_gray_vector_quantization_book} constitute another class of vector quantizers based on a decomposition or a transformation of the initial source into one or multiple components (usually lower-dimensional) that should be easier to quantize. In the unidimensional case, Bennett proposed so-called companders \cite{bennett48} asymptotically minimizing the distortion, which have been generalized for vector quantization e.g. in \cite{na95,zheng08}. %The idea of companders quantization is to successively use a mapping and a quantizer (often simple) defined on the transformed space.\\

The Grassmannian quantization problem has important applications in Massive MIMO (Multiple-Input Multiple-Output) wireless communication systems. For example, accurate channel state information (CSI) is required in order to obtain the large multiplexing gain expected from massive MIMO systems \cite{JSAC2013HoydisDebbah}. In this context, CSI needs to be known up to a multiplication by a complex value, which corresponds to a point on the Grassmannian space of complex lines; hence Grassmannian representations are well adapted to such applications, as demonstrated by the abundant literature on this topic (see  \cite{lovelaurao08} and references therein).\\

We propose in the following a quantizer based on companders for a vector uniformly distributed on a real or complex Grassmannian space, with the following properties:
\begin{itemize}
\item zero storage requirements,
\item the complexity of encoding/decoding is equivalent to the complexity of a scalar quantizer,
\item packing efficiency is comparable to the best state of the art structured approaches.
\end{itemize}

We start by introducing general notations on the considered Grassmannian space in Section \ref{sec:def} and the quantizer design problem in Section \ref{section_quantizer}. Sections \ref{section_cubesplit1} and \ref{section_cubesplit2} introduce a new \emph{cube split} quantizer for both real and complex Grassmannian sources. We finally present simulated performances results in Section \ref{section_simulations}.

\section{Definition and notations}
\label{sec:def}
Let $\mathbb{K}^d$ denote a $d$-dimensional vector space on a field $\mathbb{K}$, where $\mathbb{K}$ can be either $\mathbb{R}$ (real case) or $\mathbb{C}$ (complex case). A Grassmannian space $G(\mathbb{K}^d,r)$ is defined as the space of subspaces of $\mathbb{K}^d$ of dimension $r$. In this article, we focus on the case $r=1$, i.e. the set of lines in $\mathbb{K}^d$. %Let $\mathbb{K}$ represent in the following either $\mathbb{R}$ or $\mathbb{C}$.\\
We will use vectors to represents elements of $G(\mathbb{K}^d,1)$, i.e. the vector $\x\in\mathbb{K}^d$ represents the set $\{\lambda\x,\lambda\in\mathbb{K}\}$, which is a point in $G(\mathbb{K}^d,1)$. For simplicity, we will use vectors of unit Euclidean norm ($\|\x\|=1$) as representatives of the Grassmannian variable.
Thus we can define the chordal distance between two Grassmannian lines represented by their respective spanning vectors $\x$ and $\y$ as
\[
d_C(\x,\y) = \left\{\begin{array}{ll}
\sqrt{1-\left|\x\hermitian\y\right|^2} &\text{ if $\mathbb{K}=\mathbb{C}$},\\
\sqrt{1-\left|\x\transpose\y\right|^2} &\text{ if $\mathbb{K}=\mathbb{R}$}.
\end{array}\right.
\]\\

\section{Quantizer design problem}
\label{section_quantizer}
Let us consider a quantization function $Q$ defined on the Grassmannian $\mathcal{G}$, characterized by a partition $(C_1,\dots,C_N)$ of $\mathcal{G}$ into $N$ \emph{decision regions} and a codebook $(\x_1,\dots,\x_N)$ of unit-norm vectors in $\mathbb{K}^d$ such that for any input vector $\x$, its quantized version is 
\begin{equation}
Q(\x) = \x_i \text{  if  } \x\in C_i.  \label{eq_decision}
\end{equation}

The distortion of a quantizer is measured as the expected error between the source vector $\x$ and its quantized version $Q(\x)$
\[
\mathbb{E}_{\x}\left[ \varphi(d_C(\x,Q(\x))) \right]
\]
with $\varphi$ an increasing function depending on the considered problem, and where the expectation is taken over the distribution of the source. In this article, we focus on sources uniformly distributed on $\mathcal{G}\triangleq G(\mathbb{K}^d,1)$. Quantizer design problems typically revolve around designing quantizers which operate as close as possible to the minimum distortion for a fixed bit rate; in other words, for a given codebook size $N$ and a given source distribution, the choice of the decision regions $C_i$ and the corresponding reconstruction codewords $\x_i$ should minimize the distortion. The properties of the optimal quantizer that an efficient quantizer should ideally fulfill are \cite{gersho_gray_vector_quantization_book}:
\begin{enumerate}
\item For a given codebook, the optimal decision regions $C_i$ correspond to the collection of \emph{Voronoi regions} $V_i$ defined as 
\begin{equation} \label{eq_def_Voronoi}
V_i = \left\{ \x\in \mathcal{G}  : d_C(\x,\x_i) \leq d_C(\x,\x_j) \quad \forall j\neq i  \right\}, 
\end{equation}
i.e. the set of points of $\mathcal{G}$ which are closer to $\x_i$ than to any other codeword from the codebook. 
\item For given decision cells, the codewords $\x_i$ must correspond to the ``barycenter'' of the corresponding decision region defined as
\[
\x_i = \arg\min_{\y}\mathbb{E}_{\x\in C_i}\left[ \varphi(d_C(\x,\y ))\right].
\]
\item In the limit of large $N$ (high-resolution quantizers), and for uniform source distribution, the limiting distribution of the codewords $\x_1,\dots,\x_N$ of an optimal quantizer should be uniform as well for a large family of distortion measures \cite{zheng07}.
\end{enumerate}

The design of minimum distortion quantizers for large dimensions $d$ and for large $N$ is a non trivial problem. Numerical approaches (such as the Linde-Buzo-Gray algorithm \cite{Linde_Buzo_Gray_algo_TCOM1980}) have been proposed, which strive to achieve optimal efficiency according to the above criteria.  Since numerically optimized codebooks are generally unstructured, the associated decision regions (usually chosen to coincide with the Voronoi regions, $C_i = V_i$) lack a more tractable definition than the one in \eqref{eq_def_Voronoi}. As a consequence, deciding upon the output of the quantizer (see eq.~\eqref{eq_decision}) involves solving the optimization problem
\begin{equation}
i=\arg\min_{i=1\ldots N} d_C(\x,\x_i),  \label{eq_argmin_quantization}
\end{equation}
for each realization of the random variable $\x$. For large codebooks (e.g. if the index $i$ is encoded using $B=50$ bits, the codebook contains $N=2^B \approx 10^{15}$ vectors), it is clear that both the storage of the codebook and the complexity of evaluating the $N$ distances involved in \eqref{eq_argmin_quantization} are not practical options.

In order to address this issue, we focus instead on designing quantizers (i.e. codebooks and their associated decision regions) whose structure greatly reduces both the codebook storage problem and the computational requirements associated with the quantization operation, at the cost of a slight loss of distortion optimality. In particular, in the rest of the article, we introduce quantizer designs for which the structure of the decision regions enables a very efficient (low-complexity, zero storage) computation of \eqref{eq_decision}, at the cost of a slight relaxation of the optimal efficiency conditions detailed above.

%Quantizers are usually based on an optimized unstructured codebook $\mathcal{C}$ composed of $N$ codewords associated with a quantizer defined as
%\[
%Q(\x) = \arg\min_{\hat{\x}\in\mathcal{C}} d_C(\x,\hat{\x}).
%\]
%The partition resulting from this quantizer is then formed by the Voronoi cells associated to the codebook. Unfortunately, these quantizers exhibit a large complexity for large codebooks. More structured quantizers allowing less complex encoders/decoders are then required in that context.\\

%Optimal codebook is a codebook whose codewords are uniformly distributed in the Grassmannian \cite{zheng08}.\\ 
%Transformed quantizers (or multidimensional companders) form a class of quantizers defined as
%\[
%Q(\x) = \M^{-1}(Q_0(\M(\x))).
%\]
%where $\M$ is a mapping and $Q_0$ a \textit{reference} quantizer defined on the transformed space.\\
%
%In \cite{zheng07}, it is shown that the codewords of an optimal quantizer on the sphere should be uniformly distributed on the complex sphere {\bf redundant with condition 3 above}. We will then define in the following a collection of transformed quantizers on each cell of an initial mapping of the Grassmannian by explicitly defining a mapping $\M$ such that the codewords issued from this quantizer are approximately uniformly distributed on the Grassmannian. Moreover, the transformed space will be the unit hypercube, so that an optimal quantizer defined as the regular lattice on this hypercube is easy to define and compute.

\section{Cube Split Quantizer for $G(\mathbb{R}^d,1)$}
\label{section_cubesplit1}

Let us first consider the quantization of $\y$ uniformly distributed on $G(\mathbb{R}^d,1)$. The rationale of the proposed quantizer is the following: first split the considered Grassmannian space (homogeneous to a sphere)  into cells looking like bent hypercubes (hence the \textit{cube-split} name), and then define on each cell a bent lattice through a mapping chosen such that the resulting codewords are approximately uniformly distributed on the sphere. \\
More specifically, we propose an encoder that numerically computes a sequence of bits from $\y$ through the following major steps:
\begin{itemize}
\item \textbf{Step 1}: The determination of an initial cell and its corresponding index as well as its binary representation; this yields the first bits of the codeword index.
\item \textbf{Step 2}: The computation of the remaining bits defining the relative position of the codeword in the initial cell: this relative position is captured by $d-1$ local coordinates which are successively and independently quantized.
\end{itemize}

\subsection{Initial mesh}
Let $\mathbb{d}$ $\e_1=[1,0,\dots,0]^T, \e_2=[0,1,0,\dots,0]^T, \dots, \e_d=[0,\dots,0,1]^T$ denote the elements of the canonical basis in $\mathbb{R}^d$. The first part ($\lceil{\log_2(d)}\rceil$ bits) of the codeword index is the index $i^*$ of the canonical basis vector closest to the source vector, i.e.
\[
i^*  = \arg\min_i d_C(\y,\e_i) = \arg\max_{i}|y_i|.
\]
This operation defines a coarse quantizer, whereby $G(\mathbb{R}^d,1)$ is split into $d$ cells $C_1^0 \ldots C_d^0$, defined as 
\[
C_i^0 =  \left\{\y\in G(\mathbb{R}^d,1)\text{ s.t. } |\y\transpose \e_i|>|\y\transpose \e_j| \text{ for } j\neq i\right\}.
\]
for $1\leq i \leq d$. Note that this results in splitting the real unit-sphere into $\mathbb{R}^d$ into $2d$ cells, since colinear vectors with opposite sign in the real sphere are equivalent in the Grassmannian space (see Fig. \ref{fig:initial_cells} for an illustration in $\mathbb{R}^3$).

Let us note that the choice of the canonical basis to build the initial mesh is arbitrary -- indeed, the method can be generalized to any (possibly overcomplete) basis. However, it allows to define simple and computationally efficient local coordinates on each cell, as will be seen in the next section.
\begin{figure}[h!]
\centering
\includegraphics[width=6cm]{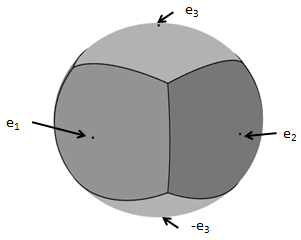}
\caption{Illustration of the initial cells on $G(\mathbb{R}^3,1)$; the three colors denote $C_1^0, C_2^0$ and $C_3^0$.}% the cell containing the vector to quantize is selected (the cells of the sphere with same colors are equivalent in the real Grassmannian $G(\mathbb{R}^3,1)$)}
\label{fig:initial_cells}
\end{figure}

The remaining part of the codeword index is dedicated to encoding the relative position of $\y$ in $C_{i^*}^0$, as detailed next.\\

\subsection{Companding of local coordinates}

In order to obtain the cube-split quantizer, each cell $C_{i^*}^0$ of the initial mesh is further divided into smaller decision cells by defining a mapping between $C_{i^*}^0$ and the unit $d-1$ dimensional cube, and defining a simple scalar quantization scheme on each coordinate of the cube. The local coordinates on $C_{i^*}^0$ are defined as the elements of the vector
\[
\t =\left(\frac{y_1}{y_{i^*}}, \ldots,\frac{y_{i^*-1}}{y_{i^*}}, \frac{y_{i^*+1}}{y_{i^*}},\ldots \frac{y_{d}}{y_{i^*}}  \right)^T.
\]
It is clear that $\t$ together with $i^*$ is sufficient to uniquely identify a point on $G(\mathbb{R}^d,1)$. Furthermore the distribution of the vector $\t$  (conditionally to the fact that $i^*=\arg\max_i |y_i|$) is known as multivariate Cauchy truncated on $[-1;1]^{d-1}$. Neglecting the statistical dependence between the components of this multivariate distribution (which is hard to compensate due to the truncation), we choose to independently quantize each component. Furthermore, in order to approximately obtain an asymptotically (in the high-resolution regime) uniform distribution of the codewords, we apply a scalar compander to each component, defined as follows. Let $\M_{i^*}$ denote the mapping between $C_{i^*}^0$ and the unit cube by
\begin{equation} \label{real_compander_1}
\begin{array}{cccc}
\M_{i^*}: & C_{i^*}^0 &\rightarrow & [0;1]^{d-1}\\
& \y=(y_1,\dots,y_d)^T & \mapsto & (a_1,\dots,a_{d-1})^T
\end{array}
\end{equation}
with for any $1\leq i\leq d-1$
\begin{equation} 
\label{eq:mapping}
a_i = \frac{2}{\pi}\tan^{-1}\left(t_i\right)+\frac{1}{2}.
\end{equation}
(see Appendix~\ref{section_appendixproofs} for a proof that the proposed mapping results in a distribution of the $a_i$ with uniform marginals on $[0;1]$).
It remains to independently quantize the $a_i$; for this, one defines a uniform scalar quantizer for each coordinate $a_i$, $i=1\ldots d-1$ where the a scalar quantization in $[0;1]$ is performed with $B_i$ bits. In other terms, the sent sequence of bits is the binary representation of
\[
\left\lfloor{2^{B_i}a_i-\frac{1}{2}}\right\rceil
\]
with $\lfloor .\rceil$ denoting the closest integer rounding operator.\\
Finally, the inverse mapping (required at the decoder) can be obtained for $\a = (a_1,\dots,a_{d-1})^T\in[0;1]^{d-1}$ as 
\begin{equation}
\M_{i^*}^{-1}(\a) = \frac{1}{\sqrt{1+\sum_{i=1}^{d-1}u_i^2}}\left(\begin{array}{c} u_1\\ \vdots\\u_{i^*-1}\\ 1\\u_{i^*}\\ \vdots\\u_{d-1}\end{array}\right)%\transpose
\label{eq:decoder1}
\end{equation}
where $u_i=\tan\left(\frac{\pi}{2}\left(a_i-\frac{1}{2}\right)\right)$ for $1\leq i\leq d-1$.

The codewords $\x_1 \ldots \x_N$ corresponding to a codebook generated using the proposed cube split design are illustrated in Fig. \ref{fig:voronoi2}, together with the corresponding decision regions and Voronoi cells. 
The fact that the proposed cube-split design is only slightly suboptimal according to criteria given in Section~\ref{section_quantizer} can be observed on the figure, and will be confirmed by more extensive simulations in Section~\ref{section_simulations}.

%Quantizers exhibit two types of asymptotic sub-optimality \cite{na95}: a suboptimal codewords density which is approximately uniform on the Grassmannian and a suboptimal shape of partition cells (which are different from Voronoi cells as illustrated by Fig. \ref{fig:voronoi2}). The fact that the proposed cube-split design is only slightly suboptimal according to these criteria can be observed on the figure, and will be confirmed by more extensive simulations in Section~\ref{section_simulations}.\\
%However, the loss due to the cubic shape of the decision cell is hard to measure.\\

%The quantized version of each $u_i$ is then 
%\[
%\hat{u_i} = 2^{-B_i} \left(\left\lfloor{2^{B_i}u_i-\frac{1}{2}}\right\rceil+\frac{1}{2}\right).
%\]

%The total number of bits used to quantize $\y$ is then equal to
%\[
%B = \lceil{\log_2(d)}\rceil+\sum_{i=1}^{d-1} B_i.
%\]

\begin{figure}[h!]
\centering
\includegraphics[width=8cm]{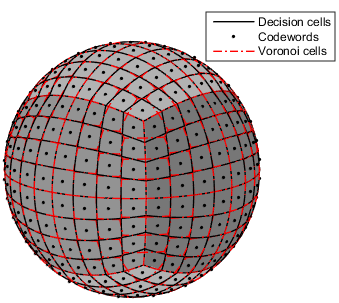}
\caption{Illustration of the cube-split codebook on $G(\mathbb{R}^3,1)$ for $B_1=B_2=3 \ \text{bits}$. The codebook defines $d.2^{(B1+B2)}=192$ lines in $\mathbb{R}^3$, each intersecting twice with the unit sphere.  Encoding of the quantized position on the sphere requires approximately 7.6 bits.} % (the codewords are represented by black dots; the Voronoi cells are the set of vectors closer to one codeword; the decision cells are the set of points quantized by the codeword in the cell with the proposed encoder/decoder) }
\label{fig:voronoi2}
\end{figure}

\section{Cube Split Quantizer for $G(\mathbb{C}^D,1)$}
\label{section_cubesplit2}

We now generalize the cube-split approach to the complex case, i.e. one point in the Grassmannian space is one line in $\mathbb{C}^D$ of the form $\{\lambda\x,\lambda\in\mathbb{C}\}$, represented by $\x\in\mathbb{C}^D$.
We introduce two possible designs for this case.
%We first consider a natural extension of the previously proposed scheme, whereby the real and imaginary components are treated as two independent real dimensions, and show that it is suboptimal. In section~\ref{complex_CS_scheme2}, we introduce a more suitable cube-split scheme tailored for the complex case.
%Consider now the quantization of $\x$ uniformly distributed on $G(\mathbb{C}^D,1)$.
\subsection{Scheme 1: Use a real representation}
\label{complex_CS_scheme1}

A natural extension of the previously proposed scheme to the complex case consists in treating the real and imaginary components of $\x$ as two independent real dimensions.
% cite examples of previous uses of this approach?
For this, we transform the initial complex vector $\x$ into a real vector in the real Grassmannian $G(\mathbb{R}^{2D-1},1)$. Let us first denote
% A first approach to quantize a complex Grassmannian vector $\x$ could be to use a real representative of $\x$ and quantize this real representative. 
\begin{equation}
\x = \left(\begin{array}{c}x_1\\ \vdots\\ x_D\end{array}\right) \in \mathbb{C}^D.
\end{equation}
By noting $\varphi=\arg(x_1)$, we define a rotated equivalent vector $\x^{(r)}$ as $\x^{(r)}=\x e^{-i\varphi}$. Note that all codewords will have a real first coordinate with this scheme (this is acceptable thanks to the invariance with respect the phase mentioned above); $\x^{(r)}$ can be rewritten as the real vector
\begin{equation}
\label{eq:representative}
\y = \left(\begin{array}{c}\text{Re}(x_1^{(r)})\\ \vdots\\ \text{Re}(x_D^{(r)})\\ \text{Im}(x_2^{(r)})\\ \vdots \\ \text{Im}(x_D^{(r)})\end{array}\right)  \in \mathbb{R}^{2D-1}.
\end{equation}
We then quantize the real vector $\y$ in the real Grassmannian of dimension $d=2D-1$ following the scheme from Section \ref{section_cubesplit1}.\\

Note however that for $\x$ uniformly distributed on $G(\mathbb{C}^D,1)$, the above transformation yields a vector whose first component $\text{Re}(x_1^{(r)})$ has different statistics from the remaining $2D-2$ components due to the statistical dependence of $\varphi$ on $x_1$, i.e. $\y$ is not uniformly distributed on $G(\mathbb{R}^{2D-1},1)$. This hints at a possible suboptimality of this scheme, which will be verified in the simulations (Section~\ref{section_simulations}).

%However, this scheme is suboptimal since the distribution of the so-defined $\y$ is not uniform in the real Grassmannian. As an example, let us consider the three initial cells in $\mathcal{G}_2=G(\mathbb{C}^2,1)$ with the above scheme. In particular, let us consider the initial cell defined by 
%\[
%C=\{\x\in\mathcal{G}_2\text{ s.t. } d_C(\y,\e_1)<d_C(\y,\e_j))\text{ for $j>1$}\}.
%\]
%Then if $\x\in \mathcal{G}_2$ such that $|x_1|>|x_2|$, this implies that $\x\in C$. Therefore, since $\mathbb{P}[|x_1|>|x_2|]=\frac{1}{2}$, the volume of $C$ is superior to half the volume of $\mathcal{G}_2$, i.e. the codewords are not uniformly distributed in the complex Grassmannian with this Scheme.

%the transformation of the three initial cells in the real Grassmannian into the complex space {\bf something not clear here... why do we have two sets of initials cells?}. The center of these initial cells in the real Grassmannian are $(1,0,0)^T$, $(0,1,0)^T$ and $(0,0,1)^T$ which are transformed into $(1,0)^T$, $(0,1)^T$ and $(0,i)^T$. Unfortunately $(0,1)^T$ and $(0,i)^T$ represent the same Grassmannian element. The volume of their respective transformed cell will then be equal to half the volume of the first transformed cell. In particular, the codewords will therefore not be uniformly distributed in the complex Grassmannian with this Scheme.\\

\subsection{Scheme 2: complex Grassmannian Cube Split quantizer}
\label{complex_CS_scheme2}
Since Scheme 1 is dependent on the choice of the real representative, we present an alternative scheme for the quantization of a complex Grassmannian vector $\x$ using different initial mesh and mapping.
\subsubsection{Initial mesh}
In the same spirit as in Scheme 1, the first $\lceil{\log_2(D)}\rceil$ bits of the codeword index represent the index $i^*$ of the closest vector $\f_{i^*}$ amongst the complex canonical basis in $\mathbb{C}^D$ denoted by $\f_1=[1,0,\dots,0]^T, \f_2=[0,1,0,\dots,0]^T, \dots, \f_D=[0,\dots,0,1]^T$. Therefore
\[
i^* =\arg\min_i d_C(\x,\f_i) = \arg\max_{i}|x_i|.
\]
where $|.|$ denotes the complex modulus. The canonical basis induces an initial mesh on the complex Grassmannian $G(\mathbb{C}^D,1)$ with $D$ cells defined as 
\[
C^{\mathbb{C}}_i = \left\{\x\in G(\mathbb{C}^D,1)\text{ such that } |\x\hermitian \f_i|>|\x\hermitian \f_j|\text{ for } j\neq i\right\}.
\]
There remains to define the second part of the codeword index, which encodes the relative position of $\x$ with respect to $\f_{i^*}$.

\subsubsection{Companding of local coordinates}

For the complex case, let us define local coordinates through the vector
\[
\t =\left(\frac{x_1}{x_{i^*}}, \ldots,\frac{x_{i^*-1}}{x_{i^*}}, \frac{x_{i^*+1}}{x_{i^*}},\ldots \frac{x_{D}}{x_{i^*}}  \right)^T.
\]
Again, it is clear that $\t$ together with $i^*$ is sufficient to identify a point in $G(\mathbb{C}^D,1)$. Furthermore, the distribution of $\t$ is known as complex multivariate Cauchy truncated on $\mathcal{D}_1^{D-1}$ where $\mathcal{D}$ denotes the unit complex disk. We introduce the mapping $\M_{i^*}^{\mathbb{C}}$ such that each complex coordinate of $\t$ is uniformly distributed on $[0;1]^2$ if $\x$ is uniformly distributed on $G(\mathbb{C}^D,1)$. Specifically, for each cell $C^{\mathbb{C}}_{i^*}$ of the initial mesh, we define 
\begin{equation}
\label{eq:mapping2_0}
\begin{array}{cccc}
\M^{\mathbb{C}}_{i^*}: & C^{\mathbb{C}}_{i^*} &\rightarrow & [0;1]^{2D-2}\\
& \x=(x_1,\dots,x_D)^T & \mapsto & \a = (a_1,\dots,a_{2D-2})^T
\end{array}
\end{equation}
with for any $1\leq i\leq D-1$, $a_{2i-1} = \mathcal{N}(\text{Re}(w_i))$ and $a_{2i}=\mathcal{N}(\text{Im}(w_i))$ where $\mathcal{N}$ is the cumulative distribution function of the standard real univariate Gaussian, i.e.
\[
\mathcal{N}(x) = \int_{-\infty}^x \frac{1}{\sqrt{2\pi}} e^{-y^2/2}dy
\]
and for $1\leq i\leq D-1$
\begin{equation}
\label{eq:mapping2}
w_i = \sqrt{2}\log\left(\frac{1+|t_i|^2}{1-|t_i|^2}\right)^{1/2}\frac{t_i}{|t_{i}|}.
\end{equation}
For $\x$ uniform on $G(\mathbb{C}^D,1)$, the resulting $\a$ has uniform marginals (see Proposition~\ref{prop2} in Appendix~\ref{section_appendixproofs}) which again prompts the use of independent scalar quantizers for its components.
More precisely, one then defines a regular grid on the cube $[0;1]^{2D-2}$, i.e. for each coordinate $a_1,\dots,a_{2D-2}$, a scalar quantization is performed in $[0;1]$. The inverse mapping (required at the decoder) can be obtained from $\a=(a_1,\dots,a_{2D-2})$ as 
\begin{equation}
(\M^{\mathbb{C}}_{i^*})^{-1}(\a) = \frac{1}{\sqrt{1+\sum_{i=1}^{D-1}|z_i|^2}}\left(\begin{array}{c} z_1\\ \vdots\\z_{i^*-1}\\ 1\\z_{i^*}\\ \vdots\\z_{D-1}\end{array}\right)%\transpose
\label{eq:decoder2}
\end{equation}
where for all $1\leq i\leq D-1$
\[
z_i=\sqrt{\frac{1-e^{-\frac{|w_i|^2}{2}}}{1+e^{-\frac{|w_i|^2}{2}}}}\frac{w_i}{|w_i|} \text{  and  } w_i=\mathcal{N}^{-1}(a_{2i-1})+i\mathcal{N}^{-1}(a_{2i}).
\]
The encoding (from $\x \in G(\mathbb{C}^D,1)$ to the codeword index) and decoding (from the codeword index to $Q(\x)$) algorithms for this scheme are summarized in Algorithm \ref{algo_cs2}.

\begin{algorithm} 
	\caption{Encoder and decoder of Scheme 2.}
	\label{algo_cs2}
	\begin{algorithmic}

     \STATE{\textbf{Encoder:} Compute the sequence of bits representing the quantized version of a complex vector $\x$ in the Grassmannian $G(\mathbb{C}^D,1)$.}
    \STATE{ }

    \STATE{-- Compute $i^*=\arg\max|x_{i}|$.}
    \STATE{-- Initialize the bits sequence: $\b\leftarrow$ binary representation of $(i^*-1)$ computed with $\lceil{\log_2(D)}\rceil$ bits.}
    \STATE{-- Compute $\a=\M^{\mathbb{C}}_{i^*}(\x)$ (Eq. (\ref{eq:mapping2_0}) and (\ref{eq:mapping2})).}

    \STATE{\textbf{for} $i = 1\dots 2D-2$}
    \STATE{$\quad$ -- Compute the integer representation $\b_i$ of $\left\lfloor{2^{B_i}a_i-\frac{1}{2}}\right\rceil$.}
    \STATE{$\quad$ -- $\b\leftarrow [\b,\b_i]$.}
    \STATE{\textbf{end}}
    \STATE{\textbf{Output:} the  computed sequence of bits $\b$.}
    \STATE{ }

   \STATE{\textbf{Decoder:} Compute the quantized vector represented by a sequence of bits $\b$.}
    \STATE{ }

    \STATE{\textbf{for} $i = 2D-2\dots 1$}
    \STATE{$\quad$ -- Compute $n_i$ as the binary representation of the last $B_i$ bits and remove these bits from $\b$.}
    \STATE{$\quad$ -- Compute $a_i = 2^{-B_i}\left(n_i+\frac{1}{2}\right)$.}
    \STATE{\textbf{end}}
  \STATE{-- Compute $i^*$ from the first $\lceil{\log_2(D)}\rceil$ bits of $\b$.}
   \STATE{-- Compute $\hat{\x}=(\M^{\mathbb{C}}_{i^*})^{-1}(\a)$ (Eq. (\ref{eq:decoder2})).}   
    \STATE{\textbf{Output:} the  quantized vector $\hat{\x}$.}
    	\end{algorithmic}
\end{algorithm}

%\section{Application: limited feedback in FDD Massive MIMO}
%\label{section_feedback} 
%Let us mention a typical scenario of limited feedback given in  with a single-user MISO channel between a $D$-antennas transmitter and a single-antenna receiver denoted by $\h\in\mathbb{C}^{D}$. If a symbol $s$ is sent with a unit-norm beamforming vector $\x\in\mathbb{C}^D$, the received signal is then
%\begin{equation}
%y = \h\hermitian\x s+\n
%\end{equation}
%where $\n$ represents a Gaussian thermal noise of variance $\sigma^2$ and $\mathbb{E}[|s|^2] = \rho$. The signal-to-noise ratio (SNR) is then
%\[
%\SNR(\x):=\mathbb{E}\left[\frac{\rho|\h\hermitian\x|^2}{\sigma^2}\right].
%\]
%Since $\x$ is unit-norm, the beamforming vector maximizing the SNR is $\x=e^{i\theta}\frac{\h}{\|\h\|}$ for any $\theta\in\mathbb{R}$. However, the transmitter does not have access to $\x$. The receiver has then to feedback the Grassmannian element represented by $\x$. Thus, it is natural in that context to address the problem of the quantization of $\x$ as the quantization of a Grassmannian element since the beamformer only needs this information.

\section{Simulation results}
\label{section_simulations}
In this section, we perform Monte Carlo simulations to compare the average squared chordal error defined as
\[
D(Q) = \mathbb{E}_{\x}[d_C(\x,Q(\x))^2]
\]
achieved by different quantizers for a uniform source on the complex Grassmannian of lines. Note that this metric has important operational significance in the context of MIMO communications \cite{ryan07,xia06}, among others.
The following codebooks and quantization approaches are considered:
\begin{itemize}
\item Fourier codebooks \cite{boccardi07}
\item Square lattice angular quantization (SLAQ) \cite{ryan07}
\item Scalar quantization
\item Cube split quantizers using real representative (Scheme 1) and defined directly in complex (Scheme 2).
\end{itemize}
In order to give upper and lower bounds on the distortion measure, we exploit the result from \cite[Th. 2]{dai08} on the performance of the best codebook of cardinality $N=2^B$, valid for the high-resolution (large $B$) regime:
\begin{equation}
\label{eq:asymp1}
\frac{d-1}{d}2^{-\frac{B}{d-1}}\leq  \inf_{Q\in\mathcal{Q}_{2^B}} D(Q) \leq  \Gamma\left(\frac{d}{d-1}\right)2^{-\frac{B}{d-1}}
\end{equation}
where $\mathcal{Q}_{N}$ denotes the set of  quantizers with $N$ codewords.\\

The distortion achieved by the various approaches, together with the bounds from eq.~(\ref{eq:asymp1}), are depicted in Figs. \ref{fig:gain} and \ref{fig:gain2} for the case of 4 and 64 complex dimensions respectively. The average number of bits per dimension is computed as $\frac{1}{D}\left(\lceil{\log_2(D)}\rceil+\sum_{i=1}^{2D-2} B_i\right)$.
These results demonstrate that in terms of distortion, the proposed cube-split quantizers perform comparably with SLAQ, while they outperform the other approaches
(unstructured codebooks generated e.g. through the generalized Lloyd algorithm \cite{Linde_Buzo_Gray_algo_TCOM1980} can not practically be simulated for the considered codebook sizes, and therefore are not included in the curves).\\ % with a lower complexity and better than the simple scalar quantizer since it exploits the Grassmannian constraints. Finally, proposed Scheme 2 is uniformly better than Scheme 1 since the choice of the real representative in Scheme 1 is arbitrary and the distribution of the real representative is not uniform in the real Grassmannian.

Table \ref{table:feature} presents an overview of the available approaches when also encoder complexity must be taken into account. The efficiency reported in this table corresponds to the capacity to achieve the optimal slope in the (logarithmic) chordal error vs. codebook size curves.
Indeed, it may be observed that the efficiency of Fourier codebooks tends asymptotically to zero because Fourier codewords do not span the complex Grassmannian (since it constrains the modulus of each component of the quantized vector to be equal to one). 
From this table, it can be seen that the cube-split quantization approach is a strong candidate for complexity-constrained applications, thanks to the low complexity of its encoder and decoder algorithms (Algorithm~\ref{algo_cs2}), while the achieved distortion remains close to the theoretical packing bound.

%\newpage
\begin{figure}[h]
\scriptsize 
\centering
% \subfloat[Dimension $d=4$]{\input{fig/bfg_d4.tikz}}
 % This file was created by matlab2tikz.
% Minimal pgfplots version: 1.3
%
%The latest updates can be retrieved from
%  http://www.mathworks.com/matlabcentral/fileexchange/22022-matlab2tikz
%where you can also make suggestions and rate matlab2tikz.
%
\definecolor{mycolor1}{rgb}{1.00000,0.00000,1.00000}%
\definecolor{mycolor2}{rgb}{0.00000,1.00000,1.00000}%
\begin{tikzpicture}

\begin{axis}[%
width=2.5in,
height=2in,
at={(0.758333in,0.48125in)},
scale only axis,
xmin=0.5,
xmax=5,
xmajorgrids,
ymin=-35,
ymax=0,
ymajorgrids,
axis x line*=bottom,
axis y line*=left,
legend style={at={(0.004405,0.024603)},anchor=south west,legend cell align=left,align=left,draw=white!15!black},
xlabel={Average number of bits/dimension},
ylabel={$D(Q)$ [dB]},
]
\addplot [color=blue,solid,mark=asterisk,mark options={solid}]
  table[row sep=crcr]{%
0.75	-3.15608913134753\\
1	-4.14075082940065\\
1.25	-4.72440704296251\\
1.5	-5.41653145987659\\
1.75	-6.35365660781227\\
2	-7.2328603802162\\
2.25	-8.48976859706488\\
2.5	-9.30323389422674\\
2.75	-9.8229967480628\\
3	-10.5475353285066\\
3.25	-11.340297025477\\
3.5	-12.4270125978456\\
3.75	-14.2286510743705\\
4	-14.8009099557291\\
4.25	-15.4695326998749\\
4.5	-16.2662451723617\\
4.75	-17.138850053018\\
5	-18.3242400648828\\
};
\addlegendentry{Real Cube Split};

\addplot [color=blue,solid,mark=o,mark options={solid}]
  table[row sep=crcr]{%
0.75	-3.86733016206536\\
1	-4.60115458620276\\
1.25	-5.26775598793606\\
1.5	-6.07461676384649\\
1.75	-7.23911967003172\\
2	-8.48145627043766\\
2.25	-9.0220134088468\\
2.5	-9.82882334964188\\
2.75	-10.4168185661119\\
3	-11.3543521416071\\
3.25	-12.3245403518324\\
3.5	-13.9598667579072\\
3.75	-14.5301305249676\\
4	-15.1720066746092\\
4.25	-15.9970476755129\\
4.5	-16.8335672675803\\
4.75	-17.9876157737229\\
5	-19.7322716452255\\
};
\addlegendentry{Complex Cube Split};

\addplot [color=mycolor1,solid,mark=diamond,mark options={solid}]
  table[row sep=crcr]{%
0.75	-3.1727718796401\\
0.75	-3.1727718796401\\
1.5	-6.46665846436044\\
1.5	-6.46665846436044\\
1.5	-6.46665846436044\\
1.5	-6.46665846436044\\
2.5	-9.9563391591441\\
2.5	-9.9563391591441\\
2.5	-9.9563391591441\\
2.5	-9.9563391591441\\
3.5	-13.9453092410993\\
3.5	-13.9453092410993\\
3.5	-13.9453092410993\\
3.5	-13.9453092410993\\
4.5	-17.5011164717623\\
4.5	-17.5011164717623\\
4.5	-17.5011164717623\\
4.5	-17.5011164717623\\
};
\addlegendentry{SLAQ};

\addplot [color=green,solid,mark=square,mark options={solid}]
  table[row sep=crcr]{%
0.75	-3.75119944738142\\
1	-4.07260697369796\\
1.25	-4.13095378584254\\
1.5	-4.05317042026725\\
1.75	-4.18268152990292\\
2	-4.08048200938786\\
2.25	-4.16460138956458\\
2.5	-4.15040086555292\\
2.75	-4.0469361588836\\
3	-4.17588034663501\\
3.25	-4.18085006841128\\
3.5	-4.14567273647966\\
3.75	-4.17048291597988\\
4	-4.10963500503297\\
4.25	-4.21195381854892\\
4.5	-4.05423831420698\\
4.75	-4.22138049925651\\
5	-4.04226779918432\\
};
\addlegendentry{Fourier codebook};

\addplot [color=mycolor2,solid,mark=+,mark options={solid}]
  table[row sep=crcr]{%
0.75	-2.29005403862224\\
1	-2.91972990666929\\
1.25	-3.34043191704684\\
1.5	-3.92004002693772\\
1.75	-4.67628081969982\\
2	-5.64321883289022\\
2.25	-5.82453089103109\\
2.5	-6.05102263388441\\
2.75	-6.30498665442415\\
3	-6.62951454797845\\
3.25	-7.07689184336418\\
3.5	-7.67430249811245\\
3.75	-8.44671743712009\\
4	-9.51196802043814\\
4.25	-9.90127225487692\\
4.5	-10.3295579457468\\
4.75	-10.8251389636678\\
5	-11.3843310884405\\
};
\addlegendentry{Scalar quantization};

\addplot [color=black,solid]
  table[row sep=crcr]{%
0.75	-5.50907468880581\\
1	-6.51250800768575\\
1.25	-7.51594132656568\\
1.5	-8.51937464544562\\
1.75	-9.52280796432556\\
2	-10.5262412832055\\
2.25	-11.5296746020854\\
2.5	-12.5331079209654\\
2.75	-13.5365412398453\\
3	-14.5399745587252\\
3.25	-15.5434078776052\\
3.5	-16.5468411964851\\
3.75	-17.5502745153651\\
4	-18.553707834245\\
4.25	-19.5571411531249\\
4.5	-20.5605744720049\\
4.75	-21.5640077908848\\
5	-22.5674411097647\\
};
\addlegendentry{Lower bound of optimal codebook};

\addplot [color=black,dashed]
  table[row sep=crcr]{%
0.75	-3.99347006474782\\
1	-4.99690338362776\\
1.25	-6.0003367025077\\
1.5	-7.00377002138764\\
1.75	-8.00720334026757\\
2	-9.01063665914751\\
2.25	-10.0140699780274\\
2.5	-11.0175032969074\\
2.75	-12.0209366157873\\
3	-13.0243699346673\\
3.25	-14.0278032535472\\
3.5	-15.0312365724271\\
3.75	-16.0346698913071\\
4	-17.038103210187\\
4.25	-18.0415365290669\\
4.5	-19.0449698479469\\
4.75	-20.0484031668268\\
5	-21.0518364857068\\
};
\addlegendentry{Upper bound of optimal codebook};

\end{axis}
\end{tikzpicture}%
\caption{Average chordal quantization error (in dB) vs. number of bits per dimension for codebooks on $G(\mathbb{C}^4,1)$.}
\label{fig:gain}
\end{figure}
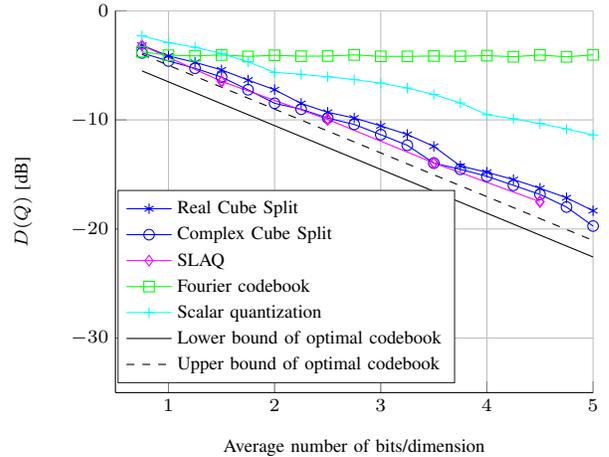
\begin{figure}[h!]
\scriptsize 
\centering
% This file was created by matlab2tikz.
% Minimal pgfplots version: 1.3
%
%The latest updates can be retrieved from
%  http://www.mathworks.com/matlabcentral/fileexchange/22022-matlab2tikz
%where you can also make suggestions and rate matlab2tikz.
%
\definecolor{mycolor1}{rgb}{1.00000,0.00000,1.00000}%
\definecolor{mycolor2}{rgb}{0.00000,1.00000,1.00000}%
\begin{tikzpicture}

\begin{axis}[%
width=2.5in,
height=2in,
at={(0.758333in,0.48125in)},
scale only axis,
xmin=0,
xmax=5,
xmajorgrids,
ymin=-25,
ymax=0,
ymajorgrids,
axis x line*=bottom,
axis y line*=left,
legend style={at={(0.035119,0.026984)},anchor=south west,legend cell align=left,align=left,draw=white!15!black},
xlabel={Average number of bits/dimension},
ylabel={$D(Q)$ [dB]},
]
\addplot [color=blue,solid,mark=asterisk,mark options={solid}]
  table[row sep=crcr]{%
0.109375	-0.320450213712694\\
0.359375	-0.703678860416021\\
0.609375	-1.11313713950296\\
0.859375	-1.55991424455248\\
1.109375	-2.06080273096361\\
1.359375	-2.60656911356906\\
1.609375	-3.2645922440599\\
1.859375	-4.01647931446754\\
2.109375	-4.85136039295606\\
2.359375	-5.1307546593623\\
2.609375	-5.50320024453565\\
2.859375	-5.8947880250819\\
3.109375	-6.38577718139816\\
3.359375	-6.92376465442701\\
3.609375	-7.66229294875195\\
3.859375	-8.52205089526123\\
4.109375	-9.52437865095296\\
4.359375	-9.95776199292568\\
4.609375	-10.3767640739324\\
4.859375	-10.9122005245768\\
};
\addlegendentry{Real Cube Split};

\addplot [color=blue,solid,mark=o,mark options={solid}]
  table[row sep=crcr]{%
0.109375	-0.358620568467307\\
0.359375	-0.738056622395871\\
0.609375	-1.15715940936568\\
0.859375	-1.61551442965919\\
1.109375	-2.12807432896411\\
1.359375	-2.6768131827014\\
1.609375	-3.34240405009271\\
1.859375	-4.08886479464164\\
2.109375	-4.88550239070268\\
2.359375	-5.21678737143919\\
2.609375	-5.63185858118314\\
2.859375	-6.03492366699328\\
3.109375	-6.55597052327053\\
3.359375	-7.16355031561129\\
3.609375	-7.92875495187678\\
3.859375	-8.81484440615869\\
4.109375	-9.73983985964896\\
4.359375	-10.1735786511541\\
4.609375	-10.5801135460248\\
4.859375	-11.1252674851873\\
};
\addlegendentry{Complex Cube Split};

\addplot [color=mycolor1,solid,mark=diamond,mark options={solid}]
  table[row sep=crcr]{%
0.984375	-1.75209084784888\\
0.984375	-1.75209084784888\\
0.984375	-1.75209084784888\\
0.984375	-1.75209084784888\\
1.96875	-4.7151252572437\\
1.96875	-4.7151252572437\\
1.96875	-4.7151252572437\\
1.96875	-4.7151252572437\\
2.96875	-6.40623287924511\\
2.96875	-6.40623287924511\\
2.96875	-6.40623287924511\\
2.96875	-6.40623287924511\\
3.96875	-9.91261132576855\\
3.96875	-9.91261132576855\\
3.96875	-9.91261132576855\\
3.96875	-9.91261132576855\\
4.96875	-11.3208802079062\\
4.96875	-11.3208802079062\\
4.96875	-11.3208802079062\\
4.96875	-11.3208802079062\\
};
\addlegendentry{SLAQ};

\addplot [color=green,solid,mark=square,mark options={solid}]
  table[row sep=crcr]{%
0.109375	-0.349312883638226\\
0.359375	-0.368308457077387\\
0.609375	-0.377228391466411\\
0.859375	-0.376091246674185\\
1.109375	-0.373970020399869\\
1.359375	-0.373562329416528\\
1.609375	-0.375380282968358\\
1.859375	-0.3733590735307\\
2.109375	-0.37454733759943\\
2.359375	-0.373037975210222\\
2.609375	-0.369107549285097\\
2.859375	-0.374177277377995\\
3.109375	-0.369019155136976\\
3.359375	-0.374185773954272\\
3.609375	-0.374945557711173\\
3.859375	-0.373681373129471\\
4.109375	-0.37199723872796\\
4.359375	-0.370069585533006\\
4.609375	-0.375135015309956\\
4.859375	-0.371347284418894\\
};
\addlegendentry{Fourier codebook};

\addplot [color=mycolor2,solid,mark=+,mark options={solid}]
  table[row sep=crcr]{%
0.109375	-0.202020898042328\\
0.359375	-0.580001290314166\\
0.609375	-0.994410547644004\\
0.859375	-1.45224030845501\\
1.109375	-1.95637534141322\\
1.359375	-2.52410869297773\\
1.609375	-3.18004363808542\\
1.859375	-3.95639724903066\\
2.109375	-4.36291063061747\\
2.359375	-4.14879376101921\\
2.609375	-3.96955484654625\\
2.859375	-3.82627936122136\\
3.109375	-3.72459277549704\\
3.359375	-3.68849854400316\\
3.609375	-3.76669436689694\\
3.859375	-4.06812735089064\\
4.109375	-4.37226638663072\\
4.359375	-4.18411391902271\\
4.609375	-4.02311106296832\\
4.859375	-3.9018247180109\\
};
\addlegendentry{Scalar quantization};

\addplot [color=black,solid]
  table[row sep=crcr]{%
0.109375	-0.471266263566089\\
0.359375	-1.23578688747461\\
0.609375	-2.00030751138314\\
0.859375	-2.76482813529166\\
1.109375	-3.52934875920018\\
1.359375	-4.29386938310871\\
1.609375	-5.05839000701723\\
1.859375	-5.82291063092575\\
2.109375	-6.58743125483428\\
2.359375	-7.3519518787428\\
2.609375	-8.11647250265132\\
2.859375	-8.88099312655985\\
3.109375	-9.64551375046837\\
3.359375	-10.4100343743769\\
3.609375	-11.1745549982854\\
3.859375	-11.9390756221939\\
4.109375	-12.7035962461025\\
4.359375	-13.468116870011\\
4.609375	-14.2326374939195\\
4.859375	-14.997158117828\\
};
\addlegendentry{Lower bound of optimal codebook};

\addplot [color=black,dashed]
  table[row sep=crcr]{%
0.109375	-0.412273083760021\\
0.359375	-1.17679370766855\\
0.609375	-1.94131433157707\\
0.859375	-2.70583495548559\\
1.109375	-3.47035557939412\\
1.359375	-4.23487620330264\\
1.609375	-4.99939682721116\\
1.859375	-5.76391745111969\\
2.109375	-6.52843807502821\\
2.359375	-7.29295869893673\\
2.609375	-8.05747932284526\\
2.859375	-8.82199994675378\\
3.109375	-9.58652057066231\\
3.359375	-10.3510411945708\\
3.609375	-11.1155618184794\\
3.859375	-11.8800824423879\\
4.109375	-12.6446030662964\\
4.359375	-13.4091236902049\\
4.609375	-14.1736443141134\\
4.859375	-14.938164938022\\
};
\addlegendentry{Upper bound of optimal codebook};

\end{axis}
\end{tikzpicture}%
\caption{Average Chordal Error (in dB) vs. number of bits per dimension for codebooks on $G(\mathbb{C}^{64},1)$.}
\label{fig:gain2}
\end{figure}
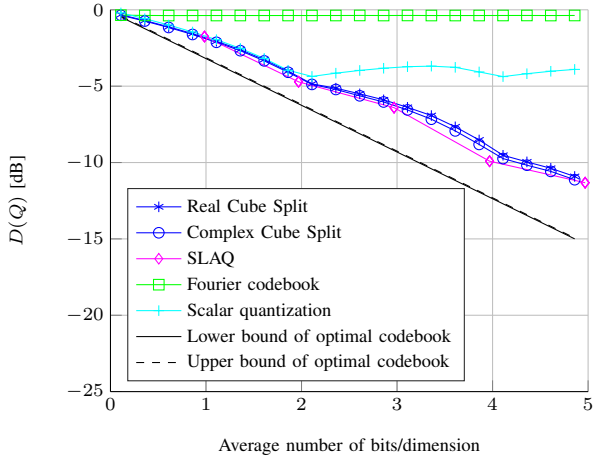

\begin{table}
\caption{Feature summary of the different quantization approaches in the high-resolution (large $D$ and $B$)  regime.}
\label{table:feature}
\begin{tabular}{|p{2cm}|p{3cm}|p{2.5cm}|}
\hline
Method & Encoding complexity & Efficiency\\
\hline\hline
Fourier Codebook \cite{boccardi07}& polynomial in $D$ and $B$ & Tends to 0 for large codebooks\\
\hline
SLAQ \cite{ryan07}& exponential in $B$; polynomial in $D$ & high\\
\hline
Scalar quantization & linear in $D$; independent from $B$ & low\\
\hline
Unstructured codebook & exponential in $B$ & high\\
\hline
Cube Split quantizer & linear in $D$; independent from $B$ & high\\
\hline
\end{tabular}
\end{table}

\section{Conclusions}
In this paper we addressed the problem of quantization of a Grassmannian element for beamforming applications. We proposed a new codebook with low-complexity encoder/decoder allowing a fast quantization even for high-dimensional, high-resolution quantization applications. The performance of the new codebook in terms of distortion is on par with the best state of the art quantization methods with a complexity equivalent to the simple scalar quantization.\\
%Future work will be devoted to codebook design in case of correlated sources.

\appendices

\section{Proofs of Propositions \ref{prop1} and \ref{prop2}} 
\label{section_appendixproofs}

\begin{proposition}[Real case]
\label{prop1}
Let $\y\in\mathbb{R}^d$ be a random vector uniformly distributed on $C_{i^*}^0$ for an arbitrary initial cell index $i^*$; using the compander defined by eqs.~\eqref{real_compander_1} and \eqref{eq:mapping}, the resulting $\M_i(\y)$ is a random vector on $[0;1]^{d-1}$ with uniform marginals.
\end{proposition}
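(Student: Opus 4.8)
The plan is to prove the statement by the probability integral transform: I will show that, under the hypothesis, each component $t_i$ of the local-coordinate vector $\t$ has exactly the law of a standard Cauchy variable conditioned to lie in $(-1,1)$, and that the scalar compander \eqref{eq:mapping} is precisely the cumulative distribution function (CDF) of that conditioned law; the assertion then follows at once, since applying a continuous random variable's own CDF to it yields a variable uniform on $[0;1]$.

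First I would reduce to the case $i^*=d$. The transposition of coordinates $i^*$ and $d$ belongs to $O(d)$, hence is an isometry of $G(\mathbb{R}^d,1)$ that carries $C_{i^*}^0$ onto $C_d^0$ and preserves the uniform law, so it suffices to treat $\y$ uniform on $C_d^0$. Picking the unit representative with $y_d>0$, this says that $\y$ follows the normalized surface measure on the spherical patch $\{\y\in S^{d-1}:y_d>|y_j|,\ 1\le j\le d-1\}$. I would then parametrize that patch by $\t\in(-1,1)^{d-1}$ via $\y=(1+\|\t\|^2)^{-1/2}(t_1,\dots,t_{d-1},1)^T$ and evaluate the Jacobian: the differential of $\t\mapsto(y_1,\dots,y_{d-1})$ is $(1+\|\t\|^2)^{-1/2}\bigl(\mathbf I-(1+\|\t\|^2)^{-1}\t\t^T\bigr)$, of determinant $(1+\|\t\|^2)^{-(d+1)/2}$, while $1/y_d=(1+\|\t\|^2)^{1/2}$; hence the surface measure pushes forward to a density proportional to $(1+\|\t\|^2)^{-d/2}$ on the cube $(-1,1)^{d-1}$, i.e. $\t$ is the $(d-1)$-variate Cauchy vector truncated to the unit cube, as stated in Section~\ref{section_cubesplit1}.

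Since both that density and the cube are invariant under permuting the coordinates of $\t$, the $t_i$ are exchangeable, and it is enough to identify the marginal of $t_1$. Integrating $(1+t_1^2+\sum_{j\ge 2}t_j^2)^{-d/2}$ over $(t_2,\dots,t_{d-1})\in(-1,1)^{d-2}$ and rescaling $t_j=\sqrt{1+t_1^2}\,s_j$ to factor $(1+t_1^2)^{-1}$ out of the integrand, the claim reduces to showing that the remaining integral of $(1+\|\mathbf s\|^2)^{-d/2}$ is independent of $t_1$. Granting this, $t_1$ has density $\tfrac{2}{\pi}(1+t_1^2)^{-1}$ on $(-1,1)$, whose CDF is $t\mapsto \tfrac{2}{\pi}\arctan t+\tfrac12$ — exactly the map \eqref{eq:mapping}. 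Consequently $a_i=F(t_i)$ is uniform on $[0;1]$, and all the marginals of $\M_i(\y)$ are uniform, which is the claim.

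The delicate point, which I expect to be the main obstacle, is the marginal computation just described: after the rescaling, the domain of integration turns into $(-\beta,\beta)^{d-2}$ with $\beta=(1+t_1^2)^{-1/2}$, so the elementary argument that makes the marginal of a non-truncated multivariate Cauchy a univariate Cauchy does not carry over verbatim, and the effect of the truncation on the marginal must be handled with care (for $d=2$ the integral is empty and the identity is immediate). The remaining ingredients — the symmetry reduction, the Jacobian evaluation, and the probability integral transform — are routine.
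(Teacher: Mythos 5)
Your setup is correct and considerably more careful than the paper's own argument, but your proof has a genuine gap at exactly the point you flag, and that gap cannot be closed: the integral $\int_{(-\beta,\beta)^{d-2}}(1+\|\mathbf s\|^2)^{-d/2}\,d\mathbf s$ with $\beta=(1+t_1^2)^{-1/2}$ really does depend on $t_1$. Take $d=3$: integrating the truncated bivariate Cauchy density $(1+t_1^2+t_2^2)^{-3/2}$ over $t_2\in(-1,1)$ gives $2\big((1+t_1^2)\sqrt{2+t_1^2}\big)^{-1}$, which is not proportional to $(1+t_1^2)^{-1}$ on $(-1,1)$ (the ratio of its values at $t_1=0$ and $t_1=1$ is $\sqrt{6}$, versus $2$ for the truncated Cauchy). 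So the marginal of $t_1$ under truncation to the cube is not the univariate Cauchy truncated to $[-1,1]$, and $a_1$ is not exactly uniform. Exact uniformity of the marginals holds only for $d=2$, where the inner integral is empty.

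For comparison, the paper's proof is one line: it observes that $t_i=y_i/y_{i^*}$ is standard Cauchy when $\y$ is uniform on the whole sphere (a ratio of two coordinates of a uniform spherical vector, equivalently of two i.i.d.\ Gaussians), states that membership in $C_{i^*}^0$ ``imposes $|t_i|<1$'', and concludes that the conditional law of $t_i$ is the Cauchy truncated to $[-1,1]$. This silently replaces the true conditioning event $\{|t_j|<1\ \forall j\neq i^*\}$ by $\{|t_i|<1\}$; since $\mathbb{P}(|t_j|<1,\ j\neq i \mid t_i)$ is not constant in $t_i$ (for $d=3$ it equals $(2+t_1^2)^{-1/2}$), the two conditionings yield different marginals. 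In other words, the paper's proof commits precisely the error your rescaling computation detects, and the proposition is only an approximation for $d\geq 3$ --- consistent with the wording in Section~\ref{section_cubesplit1} (``approximately obtain \dots\ uniform distribution'', ``neglecting the statistical dependence \dots\ due to the truncation''), but not with the exact claim of Proposition~\ref{prop1}. Your symmetry reduction, the Jacobian computation giving the density $(1+\|\t\|^2)^{-d/2}$ on the cube, and the probability-integral-transform framing are all correct; the one missing step is missing because it is false.
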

\begin{proof}
Let $\y=(y_1,\dots,y_d)^T$ be a random vector uniformly distributed on the unit sphere of $\mathbb{R}^d$. Then, if $i\neq i^*$, the random variable $t_i=\frac{y_i}{y_{i^*}}$ is drawn from a Cauchy distribution whose cumulative distribution function (cdf) is given by $x\mapsto \frac{1}{\pi}\tan^{-1}(x)+\frac{1}{2}$. The belonging to the $i^*$-th initial cell $C_{i^*}^0$ imposes $\left|t_i\right|<1$. It is then clear that the $i$-th component of $\M_{i^*}$ corresponds to the cdf of the restriction of the Cauchy distribution to $[-1;1]$.\\
\end{proof}

\begin{proposition}[Complex case]
\label{prop2}
Let $\y\in \mathbb{C}^D$ be a random vector uniformly distributed on the $i$-th initial cell of the complex Grassmannian $C^{\mathbb{C}}_i$; using the complex compander defined by eqs.~\eqref{eq:mapping2_0}--\eqref{eq:mapping2}, the resulting $\M^{\mathbb{C}}_i(\y)$ is a random vector on $[0;1]^{2D-2}$ with uniform marginals.
\end{proposition}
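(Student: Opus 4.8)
The plan is to mirror the proof of Proposition~\ref{prop1}, promoting the one-dimensional probability integral transform used there to the two real dimensions carried by a single complex local coordinate. Fix an arbitrary index $i\in\{1,\dots,D-1\}$. Because the cell $C^{\mathbb{C}}_{i^*}$ is invariant under permutations and independent phase rotations of the coordinates $\{1,\dots,D\}\setminus\{i^*\}$, the vector $\t$ is exchangeable, so it suffices to treat this single $i$. The two coordinates of $\a$ depending on it are $a_{2i-1}=\mathcal{N}(\mathrm{Re}(w_i))$ and $a_{2i}=\mathcal{N}(\mathrm{Im}(w_i))$; since $\mathcal{N}$ is the cdf of the standard real Gaussian, it is enough to show that $w_i$ is standard complex Gaussian, i.e.\ that $\mathrm{Re}(w_i)$ and $\mathrm{Im}(w_i)$ are independent $\mathcal{N}(0,1)$ variables.

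First I would pin down the law of the local coordinate $t_i$. Representing a uniform point on the sphere of $\mathbb{C}^D$ as $\g/\|\g\|$ with $\g$ having i.i.d.\ standard complex Gaussian entries, one has $t_i=g_i/g_{i^*}$, a standard complex Cauchy variable, restricted to $|t_i|\le 1$ by the event $C^{\mathbb{C}}_{i^*}$. From this I use only two facts: (i) $\arg(t_i)$ is uniform on $[0,2\pi)$ and independent of $|t_i|$, a property of the complex Cauchy that survives a truncation bearing on moduli only; and (ii) the density of $s=|t_i|^2$ is proportional to $(1+s)^{-2}$ on $[0,1]$, hence its cdf there is $F(s)=\frac{2s}{1+s}$ after normalisation.

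The core step is to check that the radial reparametrisation built into $w_i$ turns this law into the standard complex Gaussian one. Since $w_i=|w_i|\,e^{i\arg(t_i)}$ with $|w_i|^2=2\log\frac{1+|t_i|^2}{1-|t_i|^2}$ depending on $t_i$ only through $|t_i|$, the uniformity of the phase and its independence from the modulus pass to $w_i$, so it remains to identify the law of $|w_i|^2$. Inverting $u=2\log\frac{1+s}{1-s}$ gives $s=\frac{e^{u/2}-1}{e^{u/2}+1}$, whence $\Pr(|w_i|^2\le u)=F\!\left(\frac{e^{u/2}-1}{e^{u/2}+1}\right)=1-e^{-u/2}$, i.e.\ $|w_i|^2\sim\mathrm{Exp}(\tfrac12)$. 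A complex variable with uniform phase independent of its modulus and with $|w_i|^2\sim\mathrm{Exp}(\tfrac12)$ has density $\frac{1}{2\pi}e^{-|w_i|^2/2}$ in Cartesian coordinates, so $\mathrm{Re}(w_i)$ and $\mathrm{Im}(w_i)$ are indeed independent $\mathcal{N}(0,1)$; passing each through the cdf $\mathcal{N}$ yields uniform $a_{2i-1}$ and $a_{2i}$, and exchangeability extends the conclusion to all indices.

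The only genuine obstacle I anticipate is the change-of-variables computation just above --- verifying that the specific exponents and logarithm chosen in $w_i=\sqrt{2}\,\big(\log\frac{1+|t_i|^2}{1-|t_i|^2}\big)^{1/2}\frac{t_i}{|t_i|}$ are exactly those for which $F$ composed with the radial map equals $u\mapsto 1-e^{-u/2}$; everything else reduces to elementary one-variable and polar-to-Cartesian changes of variables. As in the proof of Proposition~\ref{prop1}, this argument treats each $t_i$ marginally as a complex Cauchy truncated to the unit disk, i.e.\ it neglects the statistical dependence among the components of the truncated multivariate Cauchy.
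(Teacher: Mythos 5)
Your argument is essentially the paper's own proof: it uses the same two ingredients (independence of $\tfrac{t_i}{|t_i|}$ and $|t_i|$ from the elliptical symmetry of the complex Cauchy, and the cdf $F(s)=\tfrac{2s}{1+s}$ of $|t_i|^2$ as a truncated Fisher$(2,2)$ ratio of $\chi^2$ variables), and your explicit computation $\Pr(|w_i|^2\le u)=F\bigl(\tfrac{e^{u/2}-1}{e^{u/2}+1}\bigr)=1-e^{-u/2}$ is just the unwound form of the paper's identity $|w_i|=Q(F(|t_i|^2))$ with $Q$ the Rayleigh quantile. Both versions rest on the same simplification --- treating the cell-conditioned marginal of $t_i$ as the truncated complex Cauchy --- which you candidly flag, exactly as the paper does in its discussion of the truncated multivariate Cauchy.
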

\begin{proof}
We will prove that $w_i$ defined by Eq. (\ref{eq:mapping2}) follows a Gaussian distribution. Let us first note that $\frac{t_i}{|t_{i}|}$ and $|t_i|$ are independent since $t_i$ follows a complex Cauchy distribution which is complex elliptical (see Th.~4 of \cite{ollila12}). Thus, it suffices to prove that $|w_i|=\sqrt{2}\log\left(\frac{1+|t_i|^2}{1-|t_i|^2}\right)^{1/2}$ is Rayleigh-distributed.\\
On an other hand, since $|t_i|^2$ may be seen as the quotient of two independent $\chi^2$ random variables, its distribution is a Fisher$(2,2)$ truncated on $[0;1]$ whose cdf is $F:x\mapsto \frac{2x}{x+1}$. Therefore, denoting  the quantile of the Rayleigh distribution as $Q:t\mapsto \sqrt{2}\log\left(\frac{1}{1-t}\right)^{1/2}$, it holds that $|w_i|=Q(F(|t_i|^2))$.
\end{proof}

\balance
\bibliographystyle{IEEEtran}
%\tiny
\bibliography{refen2}
\end{document}